\documentclass[doublecol]{epl2}

\usepackage{amsthm}
\usepackage{amsmath}
\usepackage{latexsym}
\usepackage{amsfonts}
\usepackage{amssymb}
\usepackage{color}
\usepackage{bbm,dsfont}
\usepackage{graphicx}
\usepackage{subfigure}

\usepackage{mathrsfs}
\usepackage{mathbbol}

\newtheorem{theorem}{Theorem}

\newtheorem{proposition}{Proposition}






\newcommand{\real}{\mathbb R} 
\newcommand{\complex}{\mathbb C} 
\newcommand{\integer}{\mathbb Z} 
\newcommand{\half}{\tfrac{1}{2}} 
\newcommand{\mo}[1]{\left| #1 \right|} 

\newcommand{\hi}{\mathcal{H}} 
\newcommand{\ip}[2]{\left\langle\,#1\,|\,#2\,\right\rangle} 
\newcommand{\kb}[2]{|#1\rangle\langle#2|} 
\newcommand{\tr}[1]{\textrm{tr}\left[#1\right]} 
\newcommand{\id}{I} 

\newcommand{\Mo}{\mathsf{M}}
\newcommand{\No}{\mathsf{N}}
\newcommand{\To}{\mathsf{T}}

\newcommand{\M}{\mathfrak{M}}

\newcommand{\Tmf}{\mathfrak{T}}

\newcommand{\sfm}{\mathsf{M}}

\newcommand{\sfg}{\mathsf{G}}


\title{Comparing the degrees of incompatibility inherent in probabilistic physical theories}
\shorttitle{Degrees of incompatibility in probabilistic physical theories}

\author{Paul Busch\inst{1} \and Teiko Heinosaari\inst{2} \and Jussi Schultz\inst{3} \and Neil Stevens\inst{1}}

\shortauthor{P.~Busch \etal}

\institute{
\inst{1} Department of Mathematics, University of York, York, YO10 5DD, United Kingdom\\
\inst{2} Turku Centre for Quantum Physics, Department of Physics and Astronomy, University of Turku, FI-20014 Turku, Finland\\
\inst{3} Dipartimento di Matematica, Politecnico di Milano, Piazza Leonardo da Vinci 32, I-20133 Milano, Italy
}




\pacs{03.65.Ta}{Foundations of quantum mechanics; measurement theory}

\abstract{
We introduce a new way of quantifying  the degrees of incompatibility of two observables in a probabilistic physical theory and,
based on this, a global measure of the {degree of incompatibility inherent in such theories, across all observable pairs.} 
This {opens up a novel} and flexible way of comparing 
probabilistic theories with respect to the nonclassical feature of incompatibility, raising many interesting questions, some of which
will be answered here. We show that quantum theory contains observables that are as incompatible as any probabilistic physical 
theory can have if arbitrary pairs of observables are considered. If one adopts a more refined measure of {the degree of incompatibility}, for instance, 
by restricting the comparison to binary observables, it turns out that there are probabilistic theories {whose inherent degree 
of incompatibility is greater than that of} quantum mechanics.  
\\
\phantom{}\\
To be published in: {\em Europhysics Letters} (2013).
}

\begin{document}
\maketitle

\thispagestyle{empty}


Quantum theory has a number of important features not known in classical physics, ranging from the 
superposition and indeterminacy principles formulated by the pioneers  to the more recently discovered 
no-cloning and no-broadcasting theorems. It is an old problem to identify operationally significant properties 
of quantum theory that distinguish it from other probabilistic theories. 
In recent years many features have been under intensive investigation from this perspective, 
including information processing \cite{Barrett07}, optimal state discrimination \cite{KiMiIm09},  
entropy \cite{Barnumetal10}, purification \cite{ChDaPe11} and discord \cite{Perinotti12}. It has been 
found that some properties are quite generally valid in any non-classical (no-signaling) probabilistic 
theories while others are specifically quantum. 

The existence of pairs of incompatible observables marks one of the most striking distinctions between 
quantum theory and  classical physical theories. There are many manifestations of incompatibility, 
perhaps the most famous being the Heisenberg uncertainty principle \cite{BuHeLa07}. However, there
are many nonclassical probabilistic theories which also possess incompatible observables, and it will be of
interest to compare quantum theory with alternative theories with respect to the feature of incompatibility.

To this end, we define the joint measurability region of any given pair of observables in a probabilistic theory. 
The joint measurability region describes the amount of added noise needed to make the observables jointly 
measurable. The global joint measurability feature of a probabilistic theory can then be characterized as the 
intersection of all the joint measurability regions associated with the theory. 

We demonstrate that quantum theory contains observables that are as incompatible 
as observables in any probabilistic theory can be. {Hence, we can say that, in a global sense,  quantum theory
has as great a degree of 
incompatibility as any other probabilistic theory. But if only binary observables are considered, the 
degree of incompatibility inherent in quantum theory is limited and we give an example of a probabilistic theory that 
contains maximally incompatible binary observables.}

 Our aim is thus to compare the incompatibility of pairs of observables in different probabilistic physical theories. 
 We first need to set some minimal constraints. 
 
 A probabilistic theory  is a framework that provides
a description of physical systems in terms of states and observables with the following general properties:\\
\\
\indent (i) The states of a system are represented by the elements of a convex subset 
of a real vector space.\\
\indent (ii) An observable is represented as an affine mapping from the set of states into the set of probability distributions 
on some outcome space.
For simplicity, we restrict ourselves here to observables with a finite or countable number of outcomes.\\
\indent (iii)
Any affine mapping from the set of states into the set of probability distributions is a valid observable.\\

We consider a particular probabilistic theory (PT) as given by a family of convex sets of states with associated
sets of observables that share some properties specific to that PT. One may think of each pair consisting of a set
of states with associated set of observables as an {\em instance} of a PT representing a particular type of physical 
system.

Given a PT, we denote by $\M(j | \varrho)$ the probability of obtaining a measurement outcome $j$ when an observable $\M$ is 
measured in a state $\varrho$.
Hence, $0\leq \M(j | \varrho) \leq 1$ and $\sum_j \M(j | \varrho) = 1$.
We will typically label the measurement outcomes by integers.

In quantum theory the states are described by density operators and observables correspond to POVMs \cite{OQP97}.
Their duality is given by the trace formula (with $\varrho$ a density operator and $\sfm$ a POVM)
\begin{equation}\label{eq:povm}
\M(j | \varrho) = \tr{\varrho \Mo(j)} \, .
\end{equation}

Another example of a probabilistic theory is a classical theory, where the states are probability measures on a 
phase space $\Omega$ and observables are traditionally represented as functions $m:\Omega\to\real$; the associated
affine maps from states $\varrho$ to probability distributions are then given by the formula
\begin{equation}
\M(j | \varrho) = \varrho(\{x\in m^{-1}(j)\}) \, .
\end{equation}

Continuing our discussion on  general probabilistic theories, we note that it follows from the required properties 
(i)-(iii) that the set of observables is a convex set; a mixture of two observables is an observable.
Physically mixing corresponds to an experiment where we switch between two measurement apparatuses with 
a random probability.
We can directly write a mixture of two observables with the same set of measurement outcomes.
If the sets of measurement outcomes differ, we can still write a mixture by first adding enough outcomes and then 
embedding both sets into $\integer$.

Another consequence of the basic requirements is that every constant mapping $\varrho\mapsto p$, where $p$ is a 
fixed probability distribution, 
is an observable and we call it a \emph{trivial observable}. 
A trivial observable $\Tmf$ corresponds to a dice rolling experiment, 
where we randomly pick the measurement outcome according to a given fixed probability distribution,
without manipulating the state at all. In quantum theory, trivial 
observables are described by POVMs $\To$ such that each operator $\To(j)$ is a multiple of the identity operator, i.e., 
$\To(j)=t_j \id$ for some $0\leq t_j \leq 1$ with $\sum_j t_j=1$.

The concept of \emph{joint measurement} can be defined in any probabilistic theory.
Two observables $\M_1$ and $\M_2$ are \emph{jointly measurable} if there exists an observable $\M$ such that
\begin{equation}
\sum_k \M(j,k | \varrho) = \M_1(j | \varrho) \, ,
\sum_j \M(j,k | \varrho) = \M_2(k | \varrho) \, .
\end{equation}
In this case $\M$ is called a \emph{joint observable} of $\M_1$ and $\M_2$.
If $\M_1$ and $\M_2$ are not jointly measurable, then we say that they are \emph{incompatible}.

Any probabilistic theory contains jointly measurable pairs of observables.
Namely, a trivial observable $\varrho\mapsto p$ is jointly measurable with any other observable;
we can write a joint observable 
\begin{equation}
\M(j,k | \varrho) = \M_1(j | \varrho) p(k) 
\end{equation}
for the trivial observable and any other observable $\M_1$.
This simply corresponds to an experiment where we measure $\M_1$ and simultaneously roll a dice.
It is a well known fact that, in quantum theory, an observable which is jointly measurable with all other observables is 
necessarily a trivial observable. Indeed, any POVM element of such an observable commutes with all projections  and must therefore be a scalar multiple of the identity (e.g. \cite[Theorem IV.1.3.1]{Ludwig1983}). 

The following simple observation is a key ingredient for our discussion.
\begin{proposition}\label{prop:sum=1}
Let $\M_1$ and $\M_2$ be two observables and $0\leq\lambda\leq 1$.
Then $\lambda \M_1 + (1-\lambda) \Tmf_1$ and $(1-\lambda) \M_2 + \lambda \Tmf_2$ are jointly measurable 
for any choice of trivial observables $\Tmf_1$ and $\Tmf_2$.
\end{proposition}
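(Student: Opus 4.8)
The plan is to prove joint measurability by writing down an explicit joint observable and then checking it is admissible. For brevity set $\A = \lambda \M_1 + (1-\lambda)\Tmf_1$ and $\B = (1-\lambda)\M_2 + \lambda\Tmf_2$. Since a trivial observable returns a fixed probability distribution independent of the state, I would record $\Tmf_1(j|\varrho) = t_j$ and $\Tmf_2(k|\varrho) = s_k$ with $\sum_j t_j = \sum_k s_k = 1$, so that $\A(j|\varrho) = \lambda\M_1(j|\varrho) + (1-\lambda)t_j$ and $\B(k|\varrho) = (1-\lambda)\M_2(k|\varrho) + \lambda s_k$. The central step is to guess the right candidate, namely
\[
\M(j,k|\varrho) = \lambda\,\M_1(j|\varrho)\,s_k + (1-\lambda)\,t_j\,\M_2(k|\varrho),
\]
which blends each observable against the \emph{constant} distribution supplied by the other channel's trivial part.

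I would then verify the three things that make $\M$ a legitimate joint observable. Positivity is immediate, as each term is a nonnegative quantity weighted by $\lambda$ or $1-\lambda$. Normalization follows by summing over both indices and using $\sum_j \M_1(j|\varrho) = \sum_k \M_2(k|\varrho) = 1$ together with $\sum_j t_j = \sum_k s_k = 1$, giving $\lambda + (1-\lambda) = 1$. For the marginals, summing over $k$ uses $\sum_k s_k = 1$ and $\sum_k \M_2(k|\varrho) = 1$ to recover $\A(j|\varrho)$, while summing over $j$ uses $\sum_j \M_1(j|\varrho) = 1$ and $\sum_j t_j = 1$ to recover $\B(k|\varrho)$. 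These are all routine one-line checks.

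The one point that demands genuine care, and the reason the construction is not entirely obvious, is admissibility in the sense of property (ii): $\M(j,k|\cdot)$ must be \emph{affine} in $\varrho$. This is precisely why the naive product $\M_1(j|\varrho)\M_2(k|\varrho)$ cannot be used as a joint observable, since it is bilinear in the state and hence not affine. The candidate above sidesteps this obstacle exactly because one factor in each term has been replaced by the state-independent constant $s_k$ or $t_j$ coming from the trivial observables; thus $\M(j,k|\cdot)$ is a linear combination of the affine maps $\M_1(j|\cdot)$ and $\M_2(k|\cdot)$ with constant coefficients, and is therefore affine. Once affinity, positivity, and normalization are in hand, property (iii) certifies that $\M$ is a valid observable, and by construction its marginals are $\A$ and $\B$, so these two are jointly measurable. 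I expect the only nontrivial ingredient to be identifying this particular mixing that restores affinity while preserving the required marginals; everything else reduces to the elementary verifications above.
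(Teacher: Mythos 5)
Your construction is exactly the joint observable used in the paper, namely $\M(j,k|\varrho)=\lambda\, p_2(k)\,\M_1(j|\varrho)+(1-\lambda)\,p_1(j)\,\M_2(k|\varrho)$ with $p_1(j)=t_j$ and $p_2(k)=s_k$, and your verifications of positivity, normalization, affinity, and the marginals match the paper's argument. The proof is correct and follows essentially the same approach.
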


This proposition can be proved with the following construction.
First, let $p_1$ and $p_2$ be the probability distributions related to $\Tmf_1$ and $\Tmf_2$.
We define an observable $\M$ by formula
\begin{equation}\label{eq:sum=1}
\M(j,k | \varrho) = \lambda p_2(k)\ \M_1(j | \varrho) + (1-\lambda) p_1(j) \ \M_2(k | \varrho) \, .
\end{equation}
For a fixed $\varrho$, the right hand side is clearly a probability distribution.
Moreover, the right hand side is an affine mapping on $\varrho$; therefore $\M$ is an observable.
The marginal observables are
\begin{align*}
\sum_k \M(j,k | \varrho) &= \lambda \M_1(j | \varrho) + (1-\lambda) p_1(j),\\
\sum_j \M(j,k | \varrho) & = (1-\lambda) \M_2(k | \varrho) + \lambda p_2(k).
\end{align*}
This proves Prop. \ref{prop:sum=1}.

The physical idea behind this construction is the following. 
In each measurement run we flip a coin and, depending on the result, we measure either $\M_1$ or $\M_2$ 
in the input state $\varrho$. In this way we get a measurement outcome for either $\M_1$ or $\M_2$.
In addition to this, we roll a dice and pretend that this is a measurement outcome for the other observable.
In this way we get an outcome for both observables simultaneously.
The overall observable is the one given in formula \eqref{eq:sum=1}.

For two observables $\M_1$ and $\M_2$, we denote by $J(\M_1,\M_2)$ the set of all points 
$(\lambda,\mu)\in[0,1]\times[0,1]$ for which there exist trivial observables $\Tmf_1,\Tmf_2$ such that 
$\lambda \M_1 + (1-\lambda) \Tmf_1$ and $\mu \M_2 + (1-\mu) \Tmf_2$ are jointly measurable, and we call
$J(\M_1,\M_2)$ the \emph{joint measurability region} of $\M_1$ and $\M_2$.
The joint measurability region thus characterizes how much noise (in terms of trivial observables) we need to add 
to  {obtain jointly measurable approximations} of $\M_1$ and $\M_2$.
Clearly, $\M_1$ and $\M_2$ are jointly measurable if and only if $(1,1)\in J(\M_1,\M_2)$.

The joint measurability region $J(\M_1,\M_2)$ is a convex region which can be plotted in the plane. 
{To see this,
let $(\lambda',\mu')\in J(\M_1,\M_2)$ and $(\lambda'',\mu'')\in J(\M_1,\M_2)$, then we have to show that
$(\lambda,\mu)\in J(\M_1,\M_2)$ for $(\lambda,\mu)=t(\lambda',\mu')+(1-t)(\lambda'',\mu'')$. Thus let 
$\M_1'=\lambda'\M_1+(1-\lambda')\Tmf_1'$ and $\M_2'=\mu'\M_2'+(1-\mu')\Tmf_2'$ be jointly measurable, and similarly
for $\M_1''=\lambda''\M_1+(1-\lambda'')\Tmf_1''$ and $\M_2''=\mu''\M_2'+(1-\mu'')\Tmf_2''$, with suitable choices
of trivial observables. Then  the observables
$t\M_1'+(1-t)\M_1''$ and $t\M_2'+(1-t)\M_2''$ are jointly measurable \cite[Prop. 2]{BuHe08}.

Note that according to Prop. \ref{prop:sum=1} the line 
$\big\{(\lambda,(1-\lambda))\,:0\,\le\lambda\le 1\big\}\subseteq J(\M_1,\M_2)$. Moreover, it is trivially the case 
that $(0,0)\in J(\M_1,\M_2)$. The convexity of $J(\M_1,\M_2)$ then entails that the convex hull of the three points
$(1,0)$, $(0,1)$ and $(0,0)$ is in $J(\M_1,\M_2)$, hence we have:}
\begin{equation*}
\triangle \equiv \{ (\lambda,\mu)\in [0,1]\times[0,1] : \lambda+\mu \leq 1 \} \subseteq J(\M_1,\M_2) \, .
\end{equation*}

\begin{figure}
\begin{center}
\includegraphics[width=2.5cm]{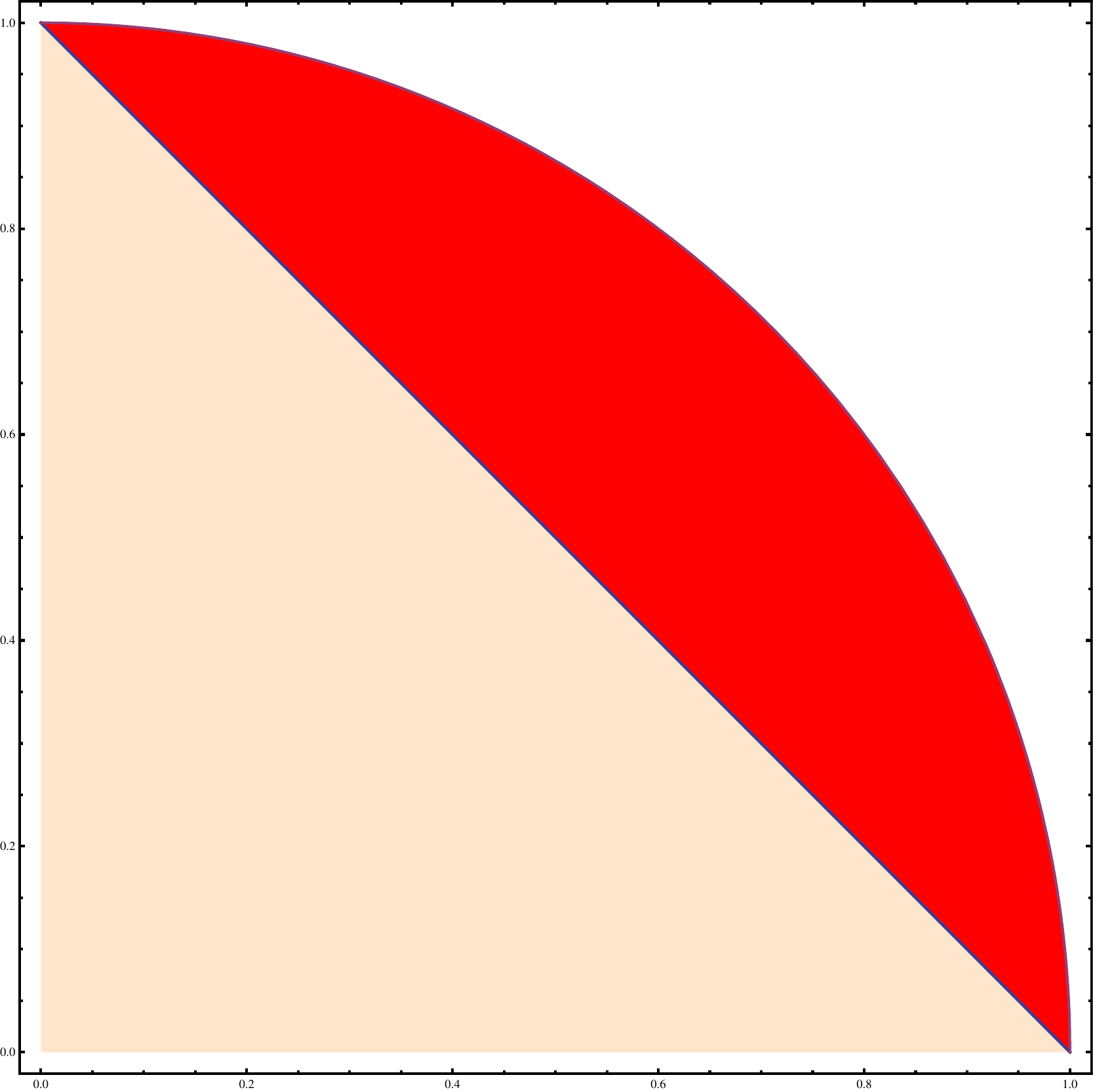}
\end{center}
\caption{(Color online) The region $J(\Mo_x,\Mo_y)$ for two orthogonal spin-$\half$ measurements is a quadrant of the unit disk. 
The region $\triangle$ (light) is a subset of $J(\M_1,\M_2)$ for any pair $\M_1,\M_2$, while the surplus region (dark) depends on the specific pair under consideration.}
\label{fig:qubit}
\end{figure}

As an example, suppose that we are within quantum theory and $\M_1$ and $\M_2$ correspond to spin-$\half$ 
measurements in two orthogonal directions, say $x$ and $y$ -axes.
We then describe them with two POVMs $\Mo_x$ and $\Mo_y$, where
\begin{equation}
\Mo_x(\pm 1) = \half (\id \pm \sigma_x) \, ,\quad \Mo_y(\pm 1) = \half (\id \pm \sigma_y) \, ,
\end{equation}
and $\sigma_x,\sigma_y$ are the usual Pauli matrices in $\complex^2$.
It has been shown in \cite{Busch86} that for the uniformly distributed trivial observable $\pm 1\mapsto\half \id$ (hence describing an unbiased coin), 
the two observables $\lambda \Mo_x + (1-\lambda)\half 1$ and $\mu \Mo_y + (1-\mu) \half 1$ are jointly measurable 
if and only if $\lambda^2+\mu^2\leq 1$. 
{It is also known \cite[Prop. 3]{BuHe08} that} this inequality is a necessary condition for the 
joint measurability of any pair $\lambda \Mo_x + (1-\lambda) \To_1$ and $\mu \Mo_y + (1-\mu) \To_2$, where 
$\To_1,\To_2$ are arbitrary trivial observables. 
Therefore, we conclude that
\begin{equation}
J(\Mo_x,\Mo_y)=\{ (\lambda,\mu)\in[0,1]\times[0,1] : \lambda^2+\mu^2\leq 1\} \, .
\end{equation}
This region is depicted in Fig.~\ref{fig:qubit}.

In addition to describing the incompatibility of pairs of observables, the concept of a joint measurability region also provides a means to compare 
{the degrees of incompatibility inherent in entire theories.} 
A global joint measurability feature of a probabilistic theory PT is characterized by the intersection of all the sets 
$J(\M_1,\M_2)$ across all instances of PT, and we denote
\begin{align*}
J_{PT} =  & \{ (\lambda,\mu)\in [0,1]\times[0,1] : (\lambda,\mu)\in J(\M_1,\M_2) \\
&   \textrm{for all pairs of observables $\M_1$ and $\M_2$}\\
& \textrm{in all instances of PT} \}.
\end{align*}
We call $J_{PT}$ the \emph{joint measurability region} 
{for} $PT$.
We always have $\triangle\subseteq J_{PT}$, but $J_{PT}$ can be larger than $\triangle$.
The larger the surplus region is, the more jointly measurable the theory is globally; see Fig.~\ref{fig:regiong}.
If $(\lambda,\mu)\notin J_{PT}$, this means that there is a pair of observables $\M_1$ and $\M_2$ such that the 
mixtures $\lambda \M_1 + (1-\lambda) \Tmf_1$ and $\mu \M_2 + (1-\mu) \Tmf_2$ are incompatible with any choice 
of trivial observables $\Tmf_1$ and $\Tmf_2$.

\begin{figure}
    \centering
    \subfigure[]
    {
        \includegraphics[width=2.5cm]{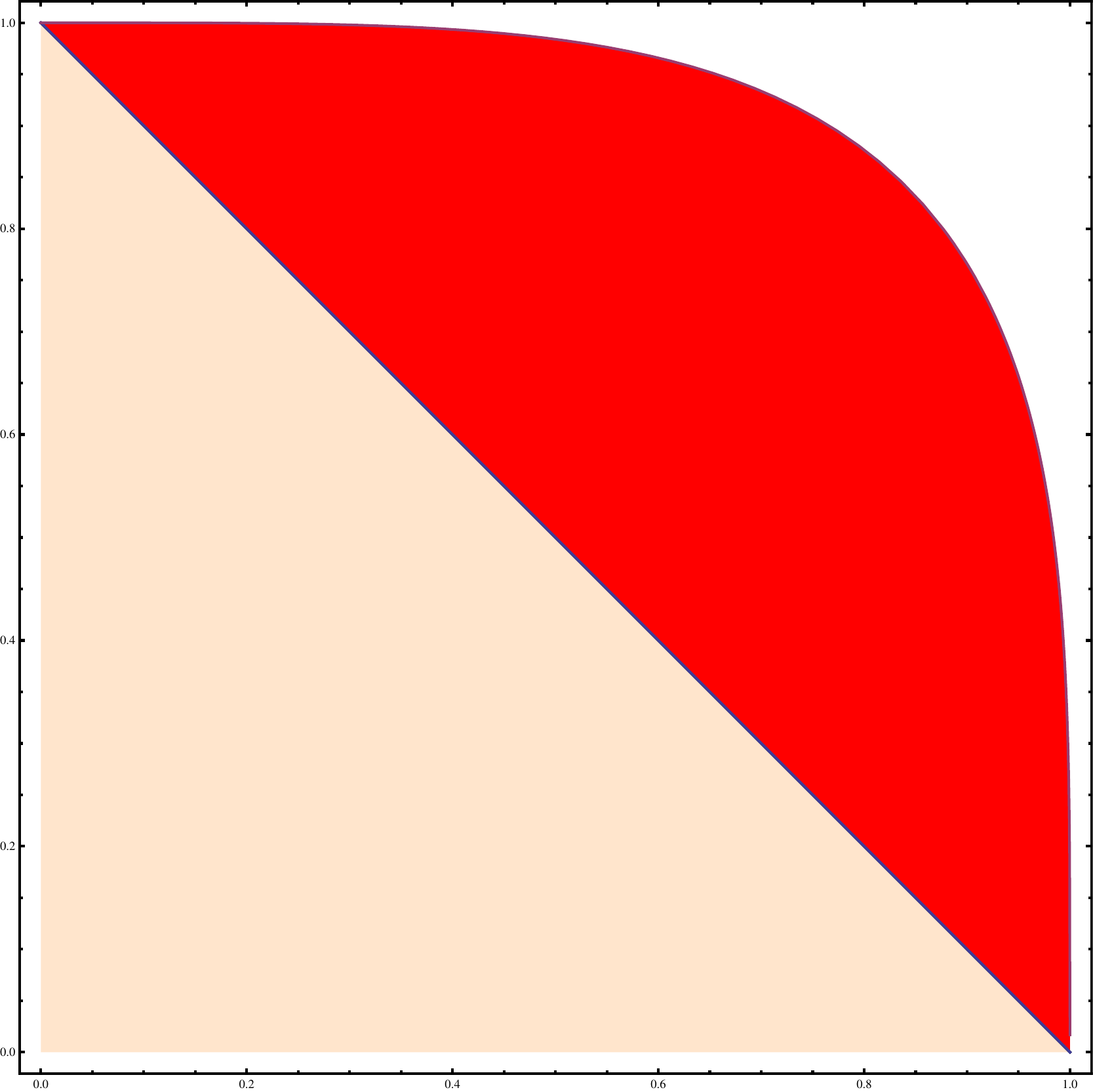} 
    }
    \subfigure[]
    {
        \includegraphics[width=2.5cm]{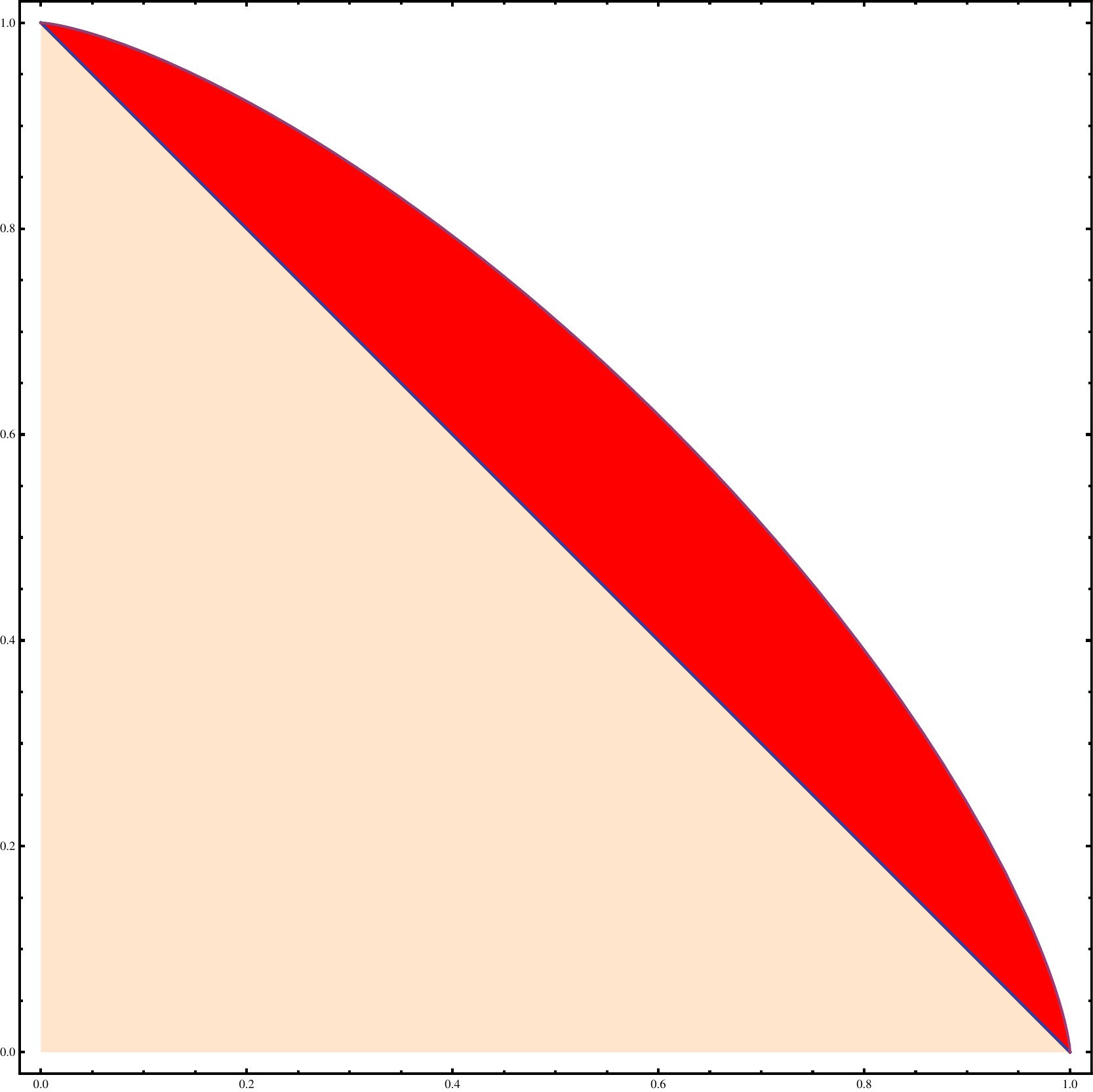}   
    }
        \caption{(Color online) The region $\triangle$ (light) is a subset of the joint measurability region $J_{PT}$ (colored) 
{for} any probabilistic theory.
The larger the surplus region (dark) is, the more jointly measurable the theory globally is.
If (a) and (b) 
{are} joint measurability regions for two different probabilistic theories, then we 
{can conclude that (b) represents a greater degree of incompatibility than (a).}}
    \label{fig:regiong}
\end{figure}

Since $J_{PT}$ can be defined in any probabilistic theory, we can compare the joint measurability regions 
{for} different theories.
We obviously have $J_{PT}=[0,1]\times [0,1]$ in any probabilistic theory where all measurements are jointly measurable, such as the classical probability theory.
In the case of the greatest degree of incompatibility we would have $J_{PT}=\triangle$.
We will next show that quantum theory 
{incorporates, globally, as much incompatibility between pairs of observables 
as a probabilistic theory can do.}

\begin{theorem}\label{th:main}
In quantum theory $J_{QT}= \triangle$.
In particular, $J_{QT} \subseteq J_{PT}$ for any probabilistic theory PT.
\end{theorem}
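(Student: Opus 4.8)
The plan is to prove the two inclusions $\triangle\subseteq J_{QT}$ and $J_{QT}\subseteq\triangle$ separately. The first is already established in the text: Prop.~\ref{prop:sum=1} places the whole segment $\{(\lambda,1-\lambda):\lambda\in[0,1]\}$ in $J(\M_1,\M_2)$ for every pair, and since $(0,0)\in J(\M_1,\M_2)$ trivially, convexity of the region gives $\triangle\subseteq J(\M_1,\M_2)$ for all pairs, hence $\triangle\subseteq J_{QT}$. Everything therefore rests on the reverse inclusion. Since $J_{QT}$ is by definition the intersection of the regions $J(\M_1,\M_2)$ over all instances, to prove $J_{QT}\subseteq\triangle$ it suffices to exhibit a family of quantum pairs whose regions intersect to exactly $\triangle$; equivalently, for each $(\lambda,\mu)$ with $\lambda+\mu>1$ I must produce one pair for which $\lambda\M_1+(1-\lambda)\Tmf_1$ and $\mu\M_2+(1-\mu)\Tmf_2$ are incompatible for \emph{every} choice of trivial observables $\Tmf_1,\Tmf_2$.

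For the witnessing pairs I would use maximally complementary observables. The cleanest candidates are the Fourier-conjugate (mutually unbiased) sharp observables $\sfp_d,\sfq_d$ on $\complex^d$, with $\sfp_d(j)=\kb{e_j}{e_j}$, $\sfq_d(k)=\kb{f_k}{f_k}$ and $\mo{\ip{e_j}{f_k}}^2=1/d$, and then let $d\to\infty$ (the continuum analogue being sharp position and momentum on $L^2(\real)$, discretised to respect the countable-outcome assumption). For each fixed $d$ the region $J(\sfp_d,\sfq_d)$ contains $\triangle$ automatically, and the qubit computation already shows that a single finite $d$ gives a region strictly larger than $\triangle$ (the quarter-disc at $d=2$). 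The claim to be established is that these regions shrink to $\triangle$, i.e. $\bigcap_d J(\sfp_d,\sfq_d)=\triangle$; since each pair is a legitimate quantum instance this yields $J_{QT}\subseteq\triangle$.

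The heart of the argument—and the step I expect to be the main obstacle—is the necessary condition for joint measurability of the noisy pair $\A:=\lambda\sfp_d+(1-\lambda)\Tmf_1$, $\B:=\mu\sfq_d+(1-\mu)\Tmf_2$. I emphasise that the positivity constraints $\nul\le\G(j,k)\le\A(j)$ and $\nul\le\G(j,k)\le\B(k)$ obeyed by any joint observable $\G$ are by themselves too weak: combining the scalar bounds $\ip{\psi}{\G(j,k)\psi}\le\min\{\ip{\psi}{\A(j)\psi},\ip{\psi}{\B(k)\psi}\}$ with the normalisation $\sum_{j,k}\ip{\psi}{\G(j,k)\psi}=1$ fails to reproduce even the quarter-disc at $d=2$, and so cannot yield $\lambda+\mu\le1$. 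One must instead exploit the full complementarity, i.e. the Weyl relation $UV=\omega VU$ of the clock and shift operators $V=\sum_j\omega^j\kb{e_j}{e_j}$, $U=\sum_k\omega^k\kb{f_k}{f_k}$ with $\omega=e^{2\pi i/d}$. From a joint observable $\G$ one forms the POVM-averages $\Phi:=\sum_{j,k}\omega^j\G(j,k)=\lambda V+(1-\lambda)c_1\id$ and $\Theta:=\sum_{j,k}\omega^k\G(j,k)=\mu U+(1-\mu)c_2\id$, where $c_1=\sum_j\omega^j t_j$ and $c_2=\sum_k\omega^k s_k$ absorb the \emph{arbitrary} trivial observables and satisfy $\mo{c_1},\mo{c_2}\le1$. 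Since $\Phi,\Theta$ are averages of unit-modulus scalars against the single POVM $\G$ one has $\no{\Phi},\no{\Theta}\le1$, and the scalars $\omega^j,\omega^k$ commute under $\G$; playing this commutativity off against $UV=\omega VU$ should yield an operator inequality forcing $\lambda+\mu\le1+O(1/\sqrt d)$, uniformly in $t_j,s_k$. Letting $d\to\infty$ then excludes every point with $\lambda+\mu>1$, completing $J_{QT}\subseteq\triangle$.

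Finally, the second assertion of the theorem is immediate: one always has $\triangle\subseteq J_{PT}$ for any probabilistic theory, so $J_{QT}=\triangle\subseteq J_{PT}$. The delicate point throughout is quantitative—turning the qualitative complementarity of conjugate bases into the sharp constant $1$ in $\lambda+\mu\le1$—and I expect essentially all of the work to go into the operator estimate producing the $O(1/\sqrt d)$ error above (or, alternatively, into invoking the known white-noise mutually-unbiased-bases threshold $\tfrac12(1+1/\sqrt d)$ and extending it from white noise to arbitrary trivial observables $\Tmf_1,\Tmf_2$).
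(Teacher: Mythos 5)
Your proposal follows essentially the same route as the paper: the witnessing pairs are the Fourier-conjugate mutually unbiased bases on $\complex^d$ with $d\to\infty$, and the whole weight of the argument rests on the incompatibility threshold $\lambda+\mu>1+O(1/\sqrt{d})$ holding for \emph{arbitrary} trivial observables. The paper does not derive that threshold by your (uncompleted) Weyl-relation sketch but imports it directly from Carmeli--Heinosaari--Toigo \cite{CaHeTo12}, whose bound $\lambda'+\mu'>1+(\sqrt{d}-1)/(d-1)$ is already stated for arbitrary (biased) trivial observables --- i.e.\ the second alternative you mention at the end is exactly what the paper does, and the extension from white noise that you anticipate needing is not required.
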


In quantum theory every observable $\M$ corresponds to a unique POVM $\Mo$ by equation \eqref{eq:povm}.
We will prove that for any pair $(\lambda,\mu)\notin\triangle$, there are quantum observables $\Mo_1$ and $\Mo_2$ 
such that the mixtures $\lambda \Mo_1 + (1-\lambda) \To_1$ and $\mu \Mo_2 + (1-\mu) \To_2$ are incompatible 
with any choice of trivial observables $\To_1,\To_2$.
Our proof is based on a recent result \cite{CaHeTo12} on the joint measurability region for two complementary observables, 
which is a generalization of the result illustrated in Fig.~\ref{fig:qubit}.

\begin{proof}[Proof of Theorem \ref{th:main}]
We have earlier seen that $\triangle\subseteq J_{PT}$, so we need to show that $J_{QT}\subseteq\triangle$.
Let $(\lambda,\mu)\notin\triangle$, i.e., $\lambda+\mu >1$.
Fix $\epsilon>0$ such that $\lambda+\mu >1+\epsilon$.
We then choose $d$ to be a positive integer satisfying 
\begin{equation}
 \frac{\sqrt{d}-1}{d-1} \leq \epsilon \, .
\end{equation}
(This can be done since the left hand side $\to 0$ when $d\to\infty$.)
We will consider a quantum system that is described by a $d$-dimensional Hilbert space $\hi_d$.
Let $\{\varphi_j\}_{j=0}^{d-1}$ be an orthonormal basis for $\hi_d$.
We define another orthonormal basis  $\{\psi_k\}_{k=0}^{d-1}$ for $\hi_d$ by 
\begin{equation}\label{eq:fourier}
\psi_k=1/\sqrt{d} \sum_j e^{-2\pi i \frac{j k}{d}} \varphi_j \, .
\end{equation}
The orthonormal bases $\{\varphi_j\}_{j=0}^{d-1}$ and $\{\psi_k\}_{k=0}^{d-1}$  are mutually unbiased, i.e., 
$\mo{\ip{\varphi_j}{\psi_k}}=constant$.
We define two POVMs $\Mo_1$ and $\Mo_2$ by
\begin{equation}
\Mo_1(j) = \kb{\varphi_j}{\varphi_j} \, , \quad \Mo_2(k) = \kb{\psi_k}{\psi_k} \, .
\end{equation}
We thus obtain a pair of $d$-outcome observables on $\hi_d$.
Since $\Mo_1$ and $\Mo_2$ consist of projections and $\Mo_1(j)\Mo_2(k)\neq \Mo_2(k)\Mo_1(j)$, 
it follows that they are incompatible.

As proved in \cite{CaHeTo12}, the observables $\lambda' \Mo_1 + (1-\lambda') \To_1$ and 
$\mu' \Mo_2 + (1-\mu') \To_2$ are incompatible for any choice of trivial observables $\To_1,\To_2$ whenever
\begin{equation}
\lambda' + \mu' > 1+ \frac{\sqrt{d}-1}{d-1}  \, .
\end{equation}
Since 
\begin{equation}
\lambda + \mu > 1+ \epsilon \geq 1+ \frac{\sqrt{d}-1}{d-1} \, ,
\end{equation}
 we conclude that $(\lambda,\mu)\notin J_{QT}$.
\end{proof}

Using the ideas of the proof of Theorem 1, we can also show that the conclusion $J_{QT}=\triangle$ can be reached by using a \emph{single pair} of incompatible 
observables if we consider an infinite dimensional system and observables with a countably infinite number of outcomes.

Let $\hi$ be an infinite dimensional Hilbert space and write it as a direct sum of finite $d$-dimensional Hilbert spaces $\hi_d$, 
$\hi=\bigoplus_{d=2}^\infty \hi_d$. In each $\hi_d$ consider a pair of mutually unbiased 
orthonormal bases $\{\varphi^d_{j}\}_{j=0}^{d-1}$ and  $\{\psi^d_{k}\}_{k=0}^{d-1}$, where the latter is obtained  
from the first one by the formula \eqref{eq:fourier}.
We define two POVMs $\No_{1}$ and $\No_{2}$ via
\begin{equation}\label{eqn:totalmub}
\No_{1}(d,j)=\kb{\varphi^d_{j}}{\varphi^d_{j}} \, , \quad \No_{2}(d,k)=\kb{\psi^d_{k}}{\psi^d_{k}} \, .
\end{equation}
These observables act in the infinite dimensional Hilbert space $\hi$ and $d$ in \eqref{eqn:totalmub} is an index labeling 
the different outcomes.
The outcome space of $\No_1$ and $\No_2$ is $\Omega_\infty\equiv \{(d,j):d=2,3,\ldots, j=0,\ldots,d-1\}$.

\begin{theorem}\label{th:totalmub}
The observables $\No_1$ and $\No_2$ defined in \eqref{eqn:totalmub} satisfy $J(\No_1,\No_2)=\triangle$.
\end{theorem}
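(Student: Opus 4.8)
The plan is to prove the nontrivial inclusion $J(\No_1,\No_2) \subseteq \triangle$, since the reverse inclusion $\triangle \subseteq J(\No_1,\No_2)$ already holds for every pair of observables. So I would fix a point $(\lambda,\mu)$ with $\lambda + \mu > 1$, choose $\epsilon > 0$ with $\lambda + \mu > 1 + \epsilon$, and pick an integer $d \geq 2$ with $\frac{\sqrt{d}-1}{d-1} \leq \epsilon$, exactly as in the proof of Theorem \ref{th:main}. The goal is then to show that, for arbitrary trivial observables $\To_1,\To_2$, the mixtures $\lambda \No_1 + (1-\lambda)\To_1$ and $\mu \No_2 + (1-\mu)\To_2$ are incompatible, so that $(\lambda,\mu)\notin J(\No_1,\No_2)$.

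The key idea is to exploit the block structure of $\No_1$ and $\No_2$ by compressing to the single block $\hi_d$. Let $P_d$ denote the orthogonal projection onto $\hi_d\subseteq\hi$. For any POVM $\M$ on $\hi$, the operators $P_d\M(\cdot)P_d$, regarded as operators on $\hi_d$, form a POVM, since $\sum_i P_d\M(i)P_d = P_d I P_d = I_{\hi_d}$. Moreover this compression preserves joint measurability: if $\M(j,k)$ is a joint observable of the two mixtures on $\hi$, then $P_d\M(j,k)P_d$ is a joint observable on $\hi_d$ of the corresponding compressed marginals, because compression commutes with marginalization. Equivalently, every state $\varrho$ supported on $\hi_d$ satisfies $\varrho = P_d\varrho P_d$, so measuring $\M$ on such states is identical to measuring its compression on $\hi_d$.

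Now I would compute these compressed marginals. Since $\No_1(d',j')=\kb{\varphi^{d'}_{j'}}{\varphi^{d'}_{j'}}$ lives entirely in the block $\hi_{d'}$, its compression to $\hi_d$ equals the $d$-dimensional mutually-unbiased-basis observable $\Mo_1$ on the outcomes $(d,j)$ and vanishes on all outcomes $(d',\cdot)$ with $d'\neq d$; the compression of the trivial observable $\To_1$ is again trivial on $\hi_d$. Hence the compressed first mixture has the form $\lambda\Mo_1+(1-\lambda)\To_1'$ together with surplus outcomes that carry only a multiple of the identity. Coarse-graining these surplus outcomes into one of the $d$ genuine outcomes, an operation that again preserves joint measurability, folds them into the trivial part and yields exactly a mixture $\lambda\Mo_1+(1-\lambda)\To_1''$ of the $d$-dimensional MUB observable with a trivial observable; likewise the second marginal becomes $\mu\Mo_2+(1-\mu)\To_2''$. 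These two compressed, coarse-grained observables are jointly measurable on $\hi_d$.

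This is where the contradiction appears: by the result of \cite{CaHeTo12} invoked in the proof of Theorem \ref{th:main}, two mixtures of the $d$-dimensional MUB observables $\Mo_1,\Mo_2$ with arbitrary trivial observables are incompatible whenever $\lambda+\mu > 1+\frac{\sqrt{d}-1}{d-1}$, which holds here since $\lambda+\mu > 1+\epsilon \geq 1+\frac{\sqrt{d}-1}{d-1}$. Thus the assumed joint observable on $\hi$ cannot exist, giving $(\lambda,\mu)\notin J(\No_1,\No_2)$ and therefore $J(\No_1,\No_2)=\triangle$. I expect the main obstacle to be the bookkeeping in the compression step, namely verifying carefully that compression by $P_d$ both sends POVMs to POVMs and commutes with marginalization, and that the off-block outcomes can be absorbed into the trivial part by coarse-graining without altering the weights $\lambda,\mu$ attached to the MUB observables.
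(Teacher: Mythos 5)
Your proposal is correct and follows essentially the same route as the paper: compress the assumed joint observable to the block $\hi_d$ via the projection $P_d$ and invoke the incompatibility bound of \cite{CaHeTo12} for mutually unbiased bases in dimension $d$. You merely make explicit the bookkeeping (that the compressed marginals, after absorbing the off-block outcomes into the trivial part by coarse-graining, are exactly mixtures $\lambda\Mo_1+(1-\lambda)\To_1''$ and $\mu\Mo_2+(1-\mu)\To_2''$) which the paper leaves implicit.
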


\begin{proof}[Proof of Theorem \ref{th:totalmub}]
Let $p_1$ and $p_2$ be two probability distributions defined on $\Omega_\infty$. 
Assume that $\lambda+\mu>1$ and define two observables $\No_{1,\lambda},\No_{2,\mu}$ via
\begin{equation}
\begin{array}{lcl}
\No_{1,\lambda}(d,j)&=&\lambda\, \kb{\varphi^d_{j}}{\varphi^d_{j}}+(1-\lambda)\,p_1(d,j)I,\\
 \No_{2,\mu}(d,k)&=&\mu\,\kb{\psi^d_{k}}{\psi^d_{k}}+(1-\mu)\,p_2(d,k)I \, .
 \end{array}
\end{equation}

We need to show that $\No_{1,\lambda}$ and $\No_{2,\mu}$ are incompatible.
To prove this, we make
{the} counter assumption that $\No_{1,\lambda},\No_{2,\mu}$ are jointly measurable. 
This implies that for any projection $P$ on $\hi$, the projected observables $P\No_{1,\lambda}P$ and 
$P\No_{2,\mu}P$ acting on a subspace $P\hi$ are jointly measurable.
 (If $\sfg$ is a joint observable of  two observables $\Mo_1,\Mo_2$, then $P\sfg P$ is a joint observable of $P\Mo_1P,P\Mo_2P$ in $P\hi$.)
Especially, the projections of $\No_{1,\lambda}$ and $\No_{2,\mu}$ to any subspace $\hi_d$ should be jointly measurable. 
But from the result cited in the proof of Theorem \ref{th:main} we know that for $d$ large enough, the projections to $\hi_d$ are incompatible.
Hence, $\No_{1,\lambda}$ and $\No_{2,\mu}$ are incompatible.
\end{proof}

We note that the observables $\No_1$ and $\No_2$ defined in \eqref{eqn:totalmub} are not the only pair satisfying $J(\No_1,\No_2)=\triangle$.
Namely, we can modify $\No_1$ and $\No_2$ in any chosen subspace $\hi_d$ but the conclusion $J(\No_1,\No_2)=\triangle$ is still true since it depends on the fact that $\No_1$ and $\No_2$ contain mutually unbiased bases in arbitrarily high dimension.

An interesting problem within quantum theory would be to try to find a characterization of all pairs of quantum observables $\Mo_1,\Mo_2$ that satisfy $J(\Mo_1,\Mo_2)=\triangle$. In particular, we may ask if maximally incompatible observables can exist in a finite dimensional Hilbert space, or if they can have a finite number of outcomes. 
 Since two mutually unbiased bases are expected to be among the most incompatible observable pairs
in a fixed dimension $d$, our construction in the proof of Theorem \ref{th:main}  suggests that the answer to the first question would be negative.
 A proof of this 
 {claim} is, however, lacking.

As for the second question, we can present a  partial answer by investigating the joint measurability region in the case of pairs of 
binary quantum observables. Our aim is to show that
$$
\{ (\lambda,\mu)\in[0,1]\times[0,1] : \lambda^2+\mu^2\leq 1\}  \subseteq J (\Mo_1, \Mo_2)
$$
for any binary observables $\Mo_1$ and $\Mo_2$, regardless of the dimension of the Hilbert space. In other words, we will show that two orthogonal spin observables are as incompatible as any binary observables can be. 

To this end, let us note that two binary quantum observables are incompatible if and only if they enable a violation of the Bell-CHSH inequality \cite{WoPeFe09}. We must therefore look at the Bell expression
$$
\mathcal{B} = \vert \langle \Mo_{1} \No_{1}\rangle  + \langle \Mo_{1} \No_{2}\rangle + \langle \Mo_{2} \No_{1}\rangle - \langle \Mo_{2}\No_{2}\rangle \vert\, .
$$
Let us denote $\alpha = \langle \Mo_{1} \No_{1}\rangle  + \langle \Mo_{1} \No_{2}\rangle $ and $\beta = \langle \Mo_{2} \No_{1}\rangle - \langle \Mo_{2}\No_{2}\rangle$. {By \cite[Theorem 1]{Tsirelson1980}, there exist unit vectors $\mathbf{x}_1,\mathbf{x}_2, \mathbf{y_1},\mathbf{y}_2\in\mathbb{R}^4$ such that $\langle \Mo_{j} \No_{k}\rangle   = \mathbf{x}_j \cdot \mathbf{y}_k$ for $j,k=1,2$;  and conversely, given any quadruple of unit vectors there exist a corresponding set of binary observables and a bipartite state such that this equality holds. In particular,  we have $\alpha = \mathbf{x}_1 \cdot (\mathbf{y}_1 + \mathbf{y}_2)$  and $ \beta = \mathbf{x}_2\cdot (\mathbf{y}_1 - \mathbf{y}_2)
$ so that an application of the Cauchy-Schwarz inequality along with the parallelogram law yields
\begin{eqnarray*}
\alpha^2 + \beta^2 &\leq& \Vert \mathbf{x}_1\Vert^2\Vert \mathbf{y}_1 + \mathbf{y}_2\Vert^2 + \Vert \mathbf{x}_2\Vert^2\Vert \mathbf{y}_1 - \mathbf{y}_2\Vert^2\\
& =  & 2\Vert \mathbf{y}_1 \Vert^2 + 2\Vert \mathbf{y}_2\Vert^2 = 4.
\end{eqnarray*}
By choosing the unit vectors appropriately we also see that any pair $(\alpha,\beta)$ satisfying this condition can be obtained.}

 If we now mix the observables $\Mo_j$ with the trivial observable $\To (\pm 1)= \frac{1}{2} I  $ with some weights $\lambda$ and $\mu$ we see that the pair $(\alpha, \beta)$ turns into $(\lambda \alpha, \mu \beta)$, thus changing the Bell expression from $\vert \alpha +\beta \vert$ to $\vert \lambda \alpha  + \mu \beta \vert$. We must therefore determine those $(\lambda, \mu)$ for which $\vert \lambda \alpha + \mu \beta \vert \leq 2 $ for all $(\alpha, \beta)$ satisfying $\alpha^2 + \beta^2 \leq 4$ (see Fig. \ref{fig:bell}). But the boundary curve for this region is obtained when the equations $ \left(\alpha/\lambda\right)^2 + \left( \beta /\mu \right)^2 = 4$ and $\alpha + \beta  = 2$ have at most one common solution. By inserting $\beta = 2-\alpha$ into the first equation the problem reduces to determining when the discriminant is negative or zero, and one readily verifies that this is the case exactly when $\lambda^2 + \mu^2  \leq 1$. 

\begin{figure}
    \centering
    \subfigure[]
    {
        \includegraphics[width=4cm]{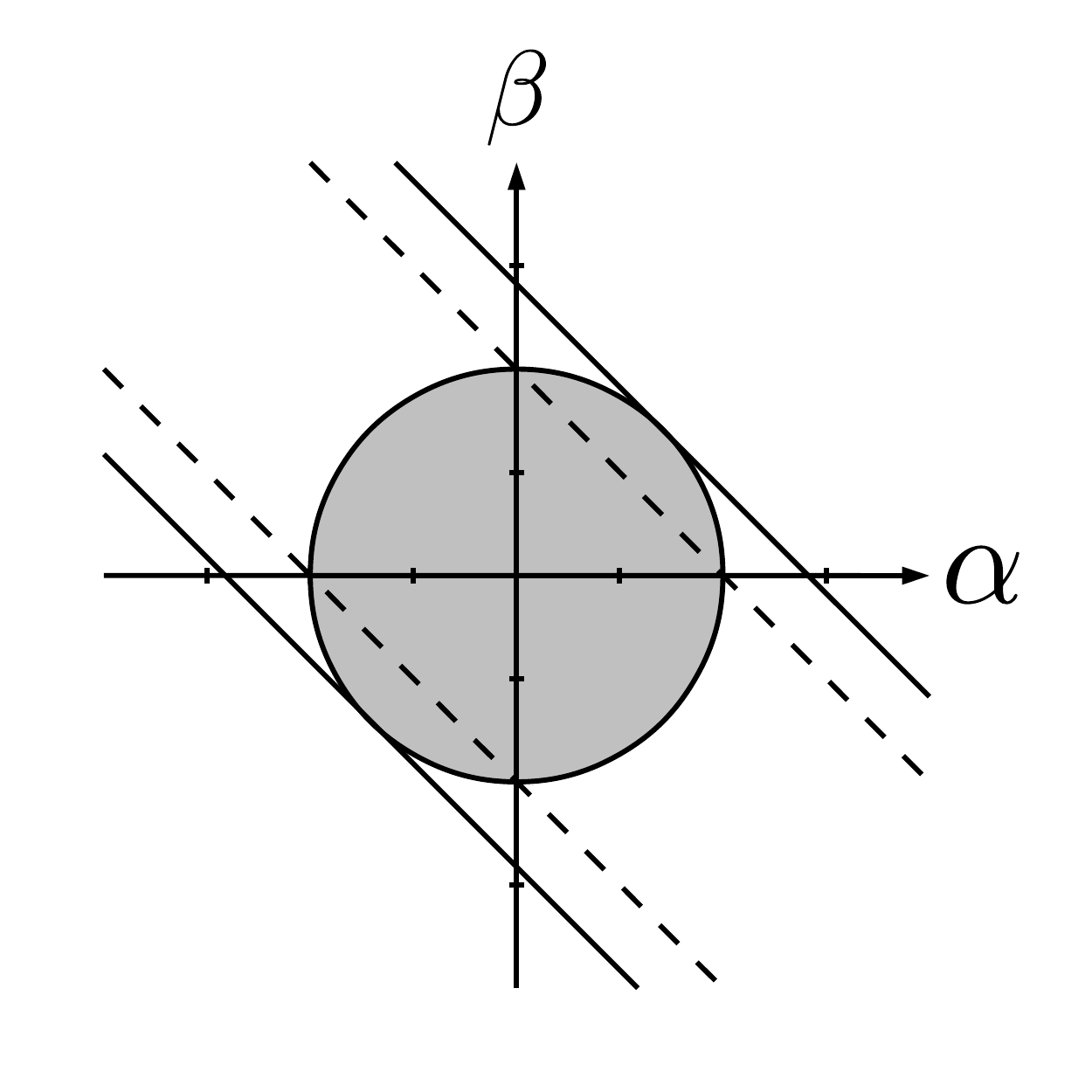}
        \label{fig:bell1}
    }
    \subfigure[]
    {
        \includegraphics[width=4cm]{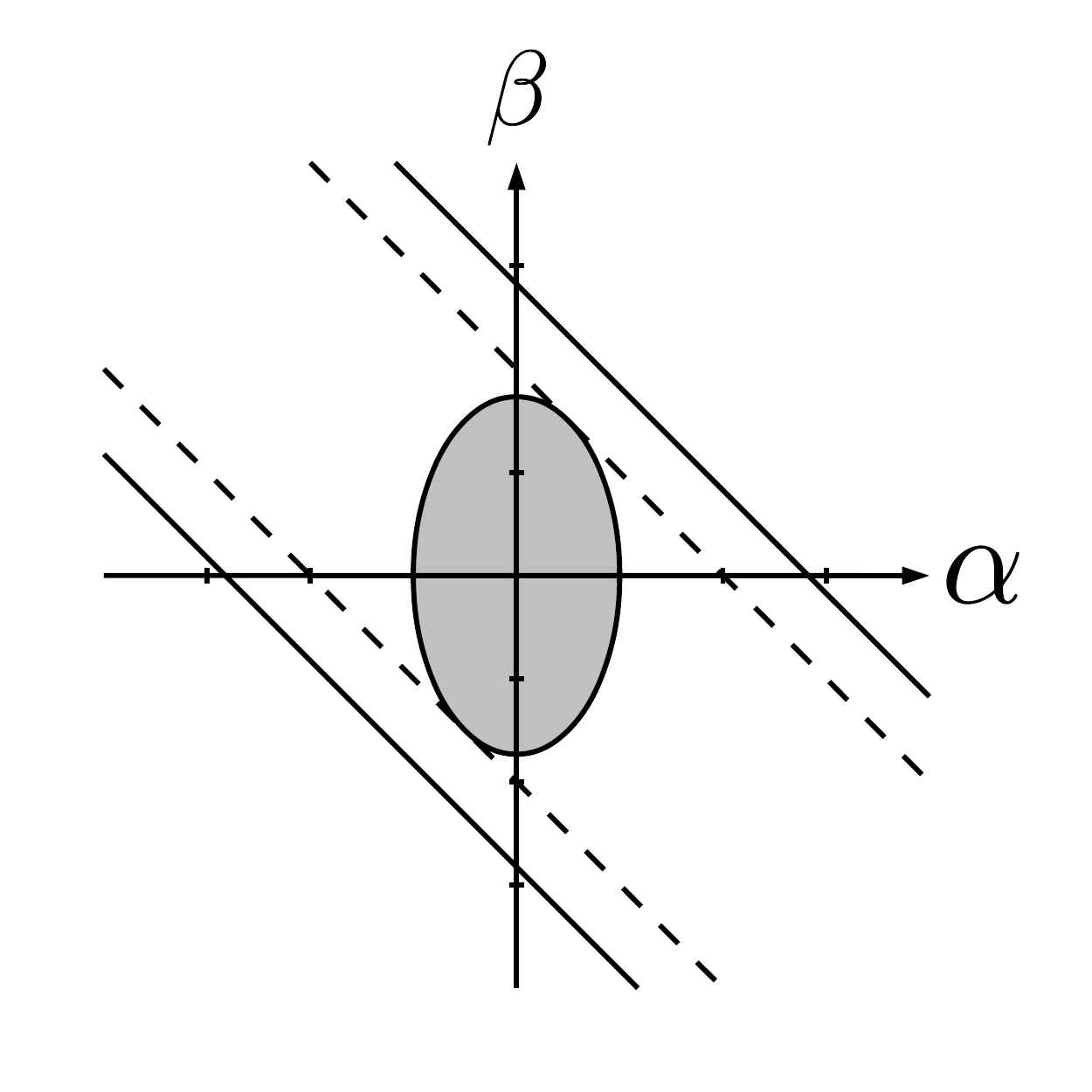}
        \label{fig:bell2}
    }
        \caption{In (a) the grey area represents the possible values that $\alpha$ and $\beta$ can obtain by varying the observables and the state in the Bell expression $\mathcal{B} =\vert \alpha + \beta\vert$. The solid lines represent the Tsirelson bound $\mathcal{B}  = 2\sqrt{2}$ and the dashed lines represent the bound $\mathcal{B}  =2$. By considering only observables which are mixtures with the uniformly distributed trivial observable with fixed $\lambda$ and $\mu$, the area becomes smaller as depicted in (b), and  a suitable choice of weights makes the violation of the Bell-CHSH inequality impossible.}
    \label{fig:bell}
\end{figure}

In conclusion, given any pair of binary observables $\Mo_1$ and $\Mo_2$, and weights $\lambda$ and $\mu$ with $\lambda^2 + \mu^2\leq 1$, the mixtures $\lambda\Mo_1 + (1-\lambda)\To$ and $\mu\Mo_2  + (1-\mu) \To$ can not be used to violate the Bell-CSHS inequality and must therefore be jointly measurable. We note that in the case $\mu = \lambda$ the same result using a different technique has been obtained by Banik {\em et al.} \cite{BaGaGhKa12}.

Although Theorem \ref{th:totalmub} shows that quantum theory contains pairs of observables that are maximally incompatible, the strictly larger joint measurability region when restricting to binary observables suggests that 
{more fine grained quantifications of 
the global degree of  incompatibility between observables might not rank quantum theory 
among the most extreme theories
in this respect.} The example below will show that when restricting to just binary observables, it is indeed possible for a theory to have the smallest possible joint measurability region. 
{In that sense such a theory must be considered to embody a strictly
greater degree of incompatibility than quantum theory.}

Consider any probabilistic theory, which contains a state space isomorphic to a square, by which we mean the convex hull of {four different points $s_1,s_2,s_3,s_4$ in $\real^2$ satisfying $s_1+s_4=s_2+s_3$, for instance $s_1=(0,0), s_2=(0,1), s_3=(1,0)$ and $s_4=(1,1)$.}
 We will show that there is a pair of binary observables which are maximally incompatible. Let $\M_1$ and $\M_2$ be binary observables that pick out the right and top sides of the square respectively, i.e.
\begin{equation}
\begin{array}{c}
\M_1(+|s_1) = \M_1(+|s_2) = 0, \\
\M_1(+|s_3) = \M_1(+|s_4) = 1, \\
\M_2(+|s_1) = \M_2(+|s_3) = 0, \\
\M_2(+|s_2) = \M_2(+|s_4) = 1. 
\end{array}\label{eq:max}
\end{equation}

\begin{proposition}\label{lem:squit}
For the binary observables $\M_1$ and $\M_2$ defined in \eqref{eq:max}, $J(\M_1,\M_2)=\triangle$.
\end{proposition}

\begin{proof}
Suppose that there exists a joint observable $\M$ for $\lambda\M_1 + (1-\lambda) \Tmf_1$ and $\mu\M_2 + (1-\mu)\Tmf_2$ where $\Tmf_1$ and $\Tmf_2$ are trivial observables. Let $p_1$ and $p_2$ be the probability distributions associated to  $\Tmf_1$ and $\Tmf_2$ so that we have for any state $\rho$
$$
\begin{array}{c}
\M(+,+|\varrho) + \M(+,-| \varrho) = \lambda\M_1(+|\varrho) + (1-\lambda)p_1(+) \\
\M(-,+|\varrho) + \M(-,-|\varrho) = \lambda\M_1(-|\varrho) + (1-\lambda)p_1(-) \\
\M(+,+|\varrho) + \M(-,+|\varrho) = \mu\M_2(+|\varrho) + (1-\mu)p_2(+) \\
\M(-,-| \varrho) + \M(+,-|\varrho) = \mu\M_2(-|\varrho) + (1-\mu)p_2(-)
\end{array}
$$

Any $\M$ satisfying such marginal properies will be correctly normalised, but to be a valid observable, all the components of $\M$ must take positive values on the points $s_i$.
In particular, we must have
\begin{align*}
\M(+,-|s_2) =& (1-\lambda) p_1(+) - \M(+,+|s_2) \ge 0, \\
\M(-,+|s_3) =& (1-\mu) p_2(+) - \M(+,+|s_3) \ge 0, \\
\M(-,-|s_4) =&  1 +\M(+,+|s_4) -\lambda-(1-\lambda)p_1(+)\\
&-\mu-(1-\mu)p_2(+)  \ge 0.
\end{align*}
Rewriting the last of these inequalities and invoking the defining property on the $s_i$ gives
\begin{align*}
\lambda+\mu \le& 1-(1-\lambda)p_1(+)-(1-\mu)p_2(+)+\M(+,+|s_4) \\
\le&-(1-\lambda)p_1(+)+\M(+,+|s_2) \\
&-(1-\mu)p_2(+)+\M(+,+|s_3) \\
& +1-\M(+,+|s_1)\le 1,
\end{align*}
where the final step comes about from invoking the positivity of $\M$ on $s_1$.
\end{proof}

The result of Proposition 2 does not come as a surprise in light of the fact that the barrier to maximal incompatibility of binary quantum observables comes from the connection with a Bell-CHSH inequality. Indeed, square shaped state spaces have been used in a model of a probabilistic theory containing the PR boxes which violate such an inequality to its maximal possible value.

We note that the conclusion of Proposition 2 is not restricted to the {square state space}. Consider any state space containing a {square} whose 
vertices $s_i$ are extreme points of the state space and whose boundary lines lie on the boundary of the state space; assume further
that opposite sides of the square are contained in parallel hyperplanes that do not intersect with the interior of the state space. These two pairs 
of hyperplanes define effects whose values on the $s_i$ satisfy Eq.~(\ref{eq:max}). It follows that the proof of Proposition 2 can be adopted
in such cases. Examples are given by state space of the following shapes: pyramid, double pyramid, cube, cylinder.

{The fact that the restriction to just binary observables allows one to differentiate between probabilistic theories that both contain 
maximally incompatible observables suggests that a more fine grained global measure of the degree of incompatibility is 
needed if the aim is to pick out a single theory as the one containing overall the most incompatible pairs of observables.}
For instance, for  a given probabilistic theory PT we may define   $J_{PT}^{(d)}$ to be the joint measurability region for all 
possible $d$-outcome observables in PT. Since increasing the number of outcomes of observables by simply adding outcomes 
that never occur does not change the properties of incompatibility, we immediately have 
$J_{PT}^{(d+1)} \subseteq J_{PT}^{(d)}$. By comparing the regions in different theories for different values of  $d$ we obtain 
a more fine grained way of comparing the 
{degrees of incompatibility within the theories. It may even turn out that in this sense
quantum theory embodies globally the least amount of incompatibility among the theories containing maximally incompatible 
observables.} However, this is still an open question and a topic for future investigations.


\

\noindent
{\bf Acknowledgements.} 
The authors wish to thank Tom Bullock for comments on an earlier version of this paper.
T.H. and J.S. acknowledge financial support from the Academy of Finland (grant no. 138135).  
J.S. also acknowledges financial support from the Italian Ministry of Education, University and Research (FIRB project RBFR10COAQ). 
{N.S. gratefully acknowledges support through the award of an Annie Currie Williamson PhD Bursary
at the University of York. Finally, the authors wish to thank an anonymous referee for suggesting a number of improvements to the
presentation of the material.}



\begin{thebibliography}{10}

\bibitem{Barrett07}
J.~Barrett,
{\em Phys. Rev. A}, {\bf 75} (2007) 032304.

\bibitem{KiMiIm09}
G.~Kimura, T.~Miyadera, H.~Imai,
{\em Phys. Rev. A}, {\bf 79} (2009) 062306.

\bibitem{Barnumetal10}
H.~Barnum, J.~Barrett, L.~Orloff Clark, M.~Leifer, R.~Spekkens, N.~Stepanik,
  A.~Wilce,  R.~Wilke,
{\em New J.~Phys.}, {\bf 12} (2010) 033024.

\bibitem{ChDaPe11}
G.~Chiribella, G.M. D'Ariano, P.~Perinotti,
{\em Phys. Rev. A}, {\bf 84} (2011) 012311.

\bibitem{Perinotti12}
P.~Perinotti,
{\em Phys. Rev. Lett.}, {\bf 108} (2012) 120502.

\bibitem{BuHeLa07}
P.~Busch, T.~Heinonen,  P.~Lahti.
{\em Phys.~Rep.}, {\bf 452} (2007) 155.

\bibitem{OQP97}
P.~Busch, M.~Grabowski,  P.J. Lahti,
{\em Operational Quantum Physics}, 
Springer-Verlag, Berlin, 1997.
2nd. corr. printing.

\bibitem{Ludwig1983}
G.~Ludwig, 
{\em Foundations of Quantum Mechanics I}, 
Springer-Verlag, Berlin, 1983.

\bibitem{BuHe08}
P.~Busch T.~Heinosaari, 
{\em Quant. Inf. Comp.}, {\bf 8} (2008) 0797.

\bibitem{Busch86}
P.~Busch,
{\em Phys. Rev. D}, {\bf 33} (1986) 2253.

\bibitem{CaHeTo12}
C.~Carmeli, T.~Heinosaari, A.~Toigo,
{\em Phys. Rev. A}, {\bf 85} (2012) 012109.

\bibitem{WoPeFe09}
M.M.~Wolf, D.~Perez-Garcia, C.~Fernandez,
{\em Phys.~Rev.~Lett.}, {\bf 103} (2009) 230402.

\bibitem{Tsirelson1980}
B.S.~Cirel'son,
{\em Lett.~Math.~Phys.},{\bf  4} (1980) 93.

\bibitem{BaGaGhKa12}
M.~Banik, Md.R.~Gazi, S.~Ghosh, G.~Kar,
{{\em Phys.~Rev.~A} {\bf 87} (2013) 052125.}

\end{thebibliography}
\end{document}